\newtheorem{theorem}{Theorem}[section]
\newtheorem{cor}[theorem]{Corollary}
\newtheorem{lemma}[theorem]{Lemma}
\newtheorem{problem}{Problem}
\newtheorem{defn}{Definition}
\newcommand\blfootnote[1]{%
  \begingroup
  \renewcommand\thefootnote{}\footnote{#1}%
  \addtocounter{footnote}{-1}%
  \endgroup
}
\title{Empty Rainbow Triangles in $k$-colored Point Sets}
\author{
Ruy Fabila-Monroy \thanks{Departamento de Matem\'aticas, CINVESTAV. Partially supported by Conacyt of Mexico, Grant 253261.} \thanks{\tt{ruyfabila@math.cinvestav.edu.mx}} \and
Daniel Perz\thanks{Institute for Software Technology, Graz University of Technology, Graz, Austria. Supported by the Austrian Science Fund (FWF): I 3340-N35 \tt{daperz@ist.tugraz.at}} \and
Ana Laura Trujillo-Negrete \footnotemark[1] \thanks{ltrujillo@math.cinvestav.mx}
}
\begin{document}
\maketitle
\blfootnote{\begin{minipage}[l]{0.3\textwidth} \includegraphics[trim=10cm 6cm 10cm 5cm,clip,scale=0.15]{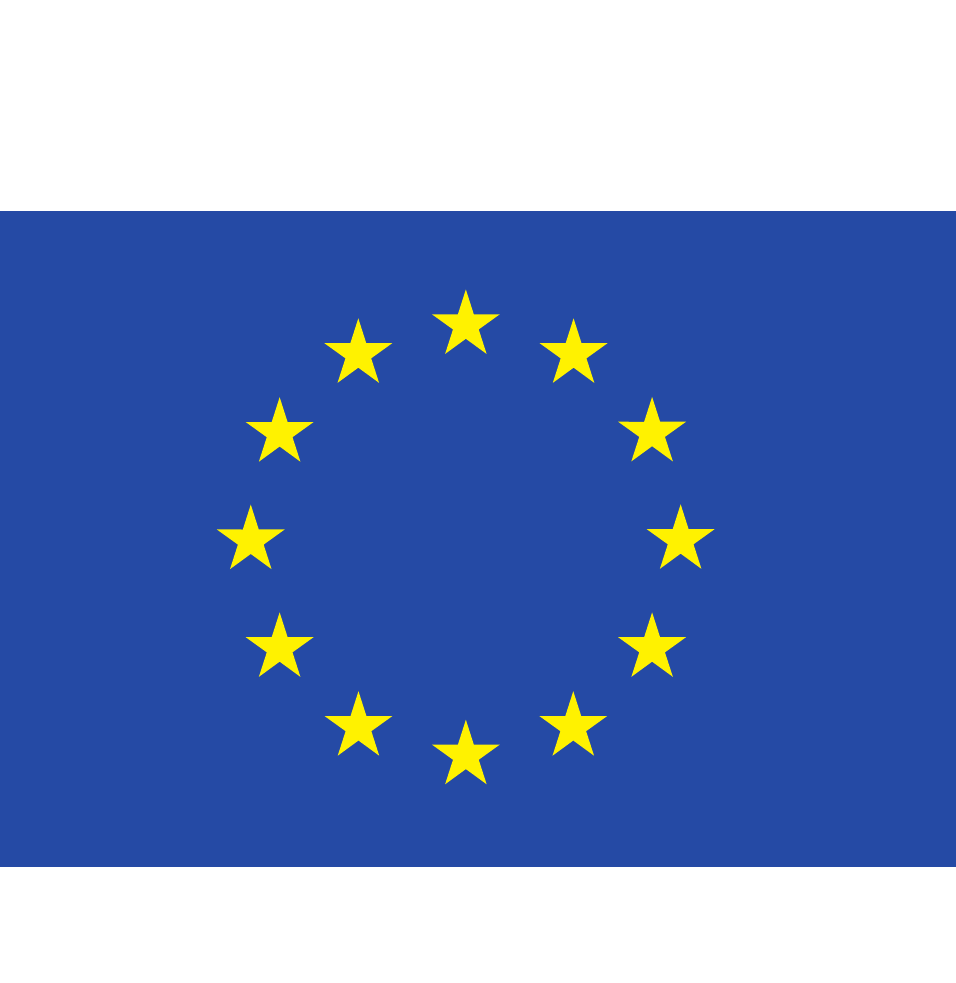} \end{minipage}  \hspace{-2.cm} \begin{minipage}[l][1cm]{0.82\textwidth}
 	  This project has received funding from the European Union's Horizon 2020 research and innovation programme under the Marie Sk\l{}odowska-Curie grant agreement No 734922.
 	\end{minipage}}

\begin{abstract}
Let $S$ be a set of $n$  points in general position in the plane. Suppose that
each point of $S$ has been assigned one of $k \ge 3$ possible colors and that there is
the same number, $m$, of points of each color class. A polygon with vertices on $S$
is empty if it does not contain points of $S$ in its interior; and it is
rainbow if all its vertices have different colors. Let $f(k,m)$ be the minimum
 number of empty rainbow triangles determined by $S$. In this paper we give tight asymptotic bounds 
 for this function. Furthermore, we show that $S$ may not determine an empty rainbow quadrilateral for some arbitrarily large values
of $k$ and $m$.
\end{abstract}

\section{Introduction}
A set of points in the plane is in \emph{general} position if no three of its points are collinear. 
In this paper all sets of points are in general position.
The well known Erd\H{o}s-Szekeres theorem~\cite{happyend} states that for every positive integer $r \ge 3$ there exists a 
positive integer $n(r)$ such that every set of  $n(r)$ (or more points) in the plane
contains the vertices of a convex polygon of $r$ vertices.

Let $S$ be a set of $n$ points in the plane. 
A polygon with vertices on $S$ is said to be \emph{empty} if it does not contain a point of $S$ in its interior.
An \emph{$r$-hole} of $S$ is an empty convex polygon of $r$ sides with vertices on $S$.
In 1978, Erd\H{o}s~\cite{kholes_erdos} asked if for every $r$, every sufficiently large set of points in the plane
contains an $r$-hole. Klein~\cite{happyend} had already noted that every set of $5$ points
contains a $4$-hole. Harboth~\cite{harborth} showed that every set of $10$ points contains
a $5$-hole. Horton~\cite{horton} constructed arbitrarily large sets of points
without $7$-holes. The case for $6$-holes remained open until Nicol\'as~\cite{nicolas}
and Gerken~\cite{gerken}, independently showed that every sufficiently large point set contains a $6$-hole.

Once the existence of $r$-holes for some given $r$ in every sufficiently large point set 
is established, it is natural to ask what is the minimum
number of $r$-holes in every set of $n$ points in the plane. 
Katchalski and Meir~\cite{empty_original} first considered this question for triangles. 
They showed that every set of $n$ points determines
$\Omega(n^2)$ empty triangles and provided an example of a point set determining
$O(n^2)$ empty triangles. The lower and upper bounds on this number have been improved throughout the years~\cite{imre_pavel,dumi,pavel,alfredo,dehnhardt,small_kholes}.
The problem of determining the minimum number of $r$-holes in every set of $n$ points in the plane has also been
considered in these papers. 

Colored variants of these problems where first studied by Devillers, Hurtado, K\'arolyi and Seara~\cite{variants}.
A point set is \emph{$k$-colored} if every one of its points is assigned one of $k$ available colors. 
We say that an $r$-hole on $S$ is \emph{monochromatic}
if all its vertices are of the same color, and that it is \emph{rainbow}\footnote{In~\cite{variants} rainbow $r$-holes
are called \emph{heterochromatic}. We prefer to use ``rainbow'', because this term is used, with this meaning, 
in the more general setting of anti-Ramsey problems. } if all its
vertices are of different colors. Many chromatic variants on problems
regarding $r$-holes in colored points sets have been studied 
since; see ~\cite{almost,mono_pach, mono_us, non_convex_quad, balanced_4_holes,almost_quad,balanced_6_holes,
clemens_4_gon,brass_4_gon,friedman_4_gon,gulik_4_gon}.
In particular, Aichholzer, Fabila-Monroy, Flores-Pe\~naloza, Hackl, Huemer, and Urrutia showed
that every $2$-colored set of $n$ points in the plane determines $\Omega(n^{5/4})$ empty
monochromatic triangles~\cite{mono_us}. This was later improved to $\Omega(n^{4/3})$ by Pach and T\'oth~\cite{mono_pach}.
The current best upper bound on this number is $O(n^2)$ and this is conjectured to be
the right asymptotic value.

In this paper we consider the problem of counting the number of empty
rainbow triangles in $k$-colored point sets in which there are the same 
number, $m$, of points of each color class.  Let $f(k,m)$ be the minimum number of empty
rainbow triangles in such a point set. We give the following tight asymptotic bound
for $f(k,m)$. 
\begin{theorem}\label{thm:main}
\begin{equation*}
  f(k,m) = \left\{
\begin{array}{lr}
 \Theta(k^2m)  & \textrm{ if } m < k,\\
 
  \Theta(k^3) & \textrm{ if }  m \ge k.
\end{array} \right. 
\end{equation*}
\end{theorem}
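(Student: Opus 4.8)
The plan is to prove the unified estimate $f(k,m)=\Theta\left(k^2\min\{m,k\}\right)$, which repackages the two cases of the statement with threshold exactly at $m=k$, by establishing matching lower and upper bounds and splitting each at $m=k$. Throughout, write $n=km$ for the total number of points. The lower bound rests on one elementary observation: if $q$ and $q'$ are \emph{angularly consecutive} around a point $p$ (in the circular order of $S\setminus\{p\}$ by angle about $p$), then the open wedge they bound contains no point of $S$, so the triangle $pqq'$ is empty. Hence every apex $p$ contributes at least $n-2$ empty ``thin'' triangles, one per consecutive angular pair, and such a triangle is rainbow exactly when the colors of $q,q'$ are distinct and both differ from the color $c(p)$ of $p$.

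Let $R_p$ count the angular adjacencies about $p$ whose two endpoints get distinct colors, both different from $c(p)$, and set $R=\sum_{p\in S}R_p$. Since each rainbow thin triangle arises from at most three apices, $f(k,m)\ge R/3$, and the whole lower bound reduces to showing $R=\Omega(k^2\min\{m,k\})$ for every configuration. Reading the colors of the neighbours of $p$ in angular order as a circular word, $R_p$ is at least the number of boundaries between maximal monochromatic blocks that do not touch a block of color $c(p)$. For $m\ge 2$ all $k$ colors occur among the neighbours, forcing at least $k$ such blocks, while $c(p)$ supplies at most $m-1$ of them; thus $R_p\ge k-2(m-1)$. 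Summing over the $n=km$ apices gives $R=\Omega(k^2m)$ whenever $m\le k/3$, establishing $f(k,m)=\Omega(k^2m)$ on the lower part of the range $m<k$ (the case $m=1$ being trivial, as every triangle is then rainbow).

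The hard part will be the regime $m=\Omega(k)$, where the per-apex estimate collapses: with $m-1\ge k-1$ points of its own color, an adversary can interleave a copy of $c(p)$ between every two differently-colored neighbours, driving $R_p$ to $0$ at a single apex. One therefore cannot argue apex-by-apex, and must instead prove the \emph{global} bound $R=\Omega(k^3)$, i.e.\ that no planar configuration can pad away the ``good'' adjacencies at all $km$ apices simultaneously; this I expect to be the heart of the proof. I would attack it by a global double count of the ``bad'' (monochromatic or own-color) adjacencies, bounding for each color $c$ how many angular adjacencies across all apices can involve $c$ — using that the $m$ points of color $c$ can absorb only a limited total angular budget — and deducing that an $\Omega(k^3)$ surplus of good adjacencies must survive. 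A convex-position/transversal argument, restricting attention to one representative point per color to recover the $\binom{k}{3}$-order term, is a plausible alternative route.

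For the upper bound I would exhibit explicit colorings. Placing the $k$ color classes as $k$ pairwise-disjoint, extremely thin (almost-collinear) clusters of $m$ points each at the vertices of a convex $k$-gon, any triple of clusters spans an essentially fixed large triangle, and within each cluster only the single extreme point facing that triangle can serve as an empty vertex; hence each of the $\binom{k}{3}$ color-triples contributes $O(1)$ empty rainbow triangles, for a total of $O(k^3)$, matching the lower bound when $m\ge k$. To reach the smaller value $O(k^2m)$ for $m<k$ one must additionally force most color-triples to contribute nothing; here I would make the clusters \emph{deep}, as long radial spikes pointing toward the centre, so that the inner points of intervening clusters block all but the ``local'' triples, leaving only $O(k^2m)$ surviving empty rainbow triangles. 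Verifying that this blocking is complete and that no non-thin empty rainbow triangles are overlooked is the delicate part of the construction, and dovetails with making the global lower bound of the previous paragraph tight.
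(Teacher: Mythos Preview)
Your lower-bound plan has a genuine gap that you yourself flag but do not close. The per-apex inequality $R_p\ge k-2(m-1)$ only yields $R=\Omega(k^2m)$ when $m$ is at most a fixed fraction of $k$; already for $m$ near $k/2$ it gives nothing, and for $m\ge k$ it is vacuous. Your proposed remedies --- a ``global double count of bad adjacencies'' or a ``transversal argument'' --- are not carried out, and it is not clear either one works: an adversary genuinely \emph{can} interpose own-color points at a single apex, and you give no mechanism preventing this simultaneously across all $km$ apices. The paper avoids this entirely by a different and much cleaner device. Sort the leftmost representatives $p_1,\dots,p_k$ of the $k$ colors by $x$-coordinate; then for any of the first $\min\{i,m\}$ points $q$ of color $i$, all of $p_1,\dots,p_{i-1}$ lie to its left, so the angularly consecutive pairs to the left of $q$ already contain $i-2$ color changes among colors $\ne i$, and at most $j-1$ of these can be spoiled by earlier color-$i$ points (where $q$ is the $j$-th point of color $i$). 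Summing over $j$ and $i$ gives the full $\Omega(k^2\min\{m,k\})$ directly, with no global argument needed.

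Your upper-bound sketch is also incomplete, and one step is wrong as stated. For the convex $k$-gon with ``extremely thin almost-collinear'' clusters, the claim that each color-triple contributes $O(1)$ empty rainbow triangles depends critically on the cluster direction: if $C_i$ lies along the tangent line at $p_i$, then for \emph{every} $a\in C_i$ the wedge toward $p_j,p_l$ misses $C_i\setminus\{a\}$, giving $m^3$ empty triangles per triple, not $O(1)$. Radial clusters fare better, but this must be argued. More seriously, the $O(k^2m)$ construction for $m<k$ via ``deep spikes'' is only a hope; you do not show which triples get blocked or why the surviving count is $O(k^2m)$. The paper's construction is substantially more intricate: it starts from a Horton set on $k$ points (whose empty triangles are few and recursively structured), replaces each point by a cluster, and inserts carefully layered \emph{blocker} points inside each cluster that kill all but $O(t^2)$ choices at each vertex of a layer-$t$ empty Horton triangle. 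The geometric decay across Horton layers then gives the bound. Your spike idea does not supply this layered control.
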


Note that in contrast to the number of empty monochromatic triangles, 
the number of empty rainbow triangles does not necessarily  grow with the number of points.

\section{Lower Bound}

%
\begin{proof}[Proof of the lower bound in Theorem~\ref{thm:main}]
Let $S$ be a $k$-colored set of points with $m$ points of each color class. Without loss of generality
assume that no two points of $S$ have the same $x$-coordinate. 
Assume that the set of colors is $\{1,\dots,k\}$.
For each $1 \le i \le k$, let $p_i$ be the leftmost point of color $i$.
Without loss of generality assume that when sorted by $x$-coordinate
these points are $p_1,\dots, p_k$.

Let $1 \le i \le k$ and let $r_i:=\min\{i,m\}$.
We show that there are at least $(r_i^2-3r_i+2)/2$ 
empty rainbow triangles having a point of color $i$ as its rightmost point.
Let $q_1:=p_i, q_2,\dots, q_{r_i-2}$ be the first $r_i-2$ points of color
$i$ when sorted by $x$-coordinate. For each $1 \le j \le r_i-2$ do the following. Sort the points
of $S$ to the left of $q_j$ counterclockwise by angle around $q_j$. Note that any 
two consecutive points in this order define an empty triangle
with $q_j$ as its rightmost point. Since the points $p_1,\dots,p_{i-1}$
are to the left of $q_j$, there are at least $i-2$ of these empty triangles
such that the first point is of a color $l$ distinct from $i$, and the 
next point is of a color distinct from $l$. Furthermore, for at least
$(i-2)-(j-1)=i-j-1$ of these triangles the next point is not of color $i$; thus, they are rainbow.  We have at least
\begin{equation}\label{eq:rightmost}
 \sum_{j=1}^{r_i-1} i-j-1=\frac{(r_i-1)(2i-r_i-2)}{2} 
 \end{equation}
empty rainbow triangles with a point of color $i$ as its rightmost point.
If  $i \le m$ then the right hand side of (\ref{eq:rightmost}) is equal to 
\[\frac{i^2-3i+2}{2}\]

Thus, if $m \ge k$ then $S$ determines at least
\[\sum_{i=3}^{k}\frac{i^2-3i+2}{2} =\frac{1}{6}k^3-\frac{1}{2}k^2+\frac{1}{3}k=\Omega(k^3) \]
empty rainbow triangles; and
if $m < k$ then $S$ determines at least
\begin{align}
 & \sum_{i=3}^{k}\frac{(r_i-1)(2i-r_i-2)}{2} \nonumber \\
 & = \sum_{i=3}^{m}\frac{i^2-3i+2}{2}
 + \sum_{i=m+1}^{k}\frac{(m-1)(2i-m-2)}{2}\nonumber \\
 &=\frac{1}{2}k^2m-\frac{1}{2}km^2+\frac{1}{6}m^3-\frac{1}{2}k^2+\frac{1}{2}k-\frac{1}{6}m \nonumber \\
 & = \Omega(k^2m)+\Omega(km^2+m^3) \nonumber \\
 &=  \Omega(k^2m) \nonumber
\end{align}
empty rainbow triangles
\end{proof}

\section{Upper Bound}
In this section we construct a $k$-colored point set which provides our upper bounds for $f(k,m)$.

\subsection{The Empty Triangles of the Horton Set}

As a building block for our construction we use Horton sets~\cite{horton};
in this section we characterize the empty triangles of the Horton set.
Let $H$ be a set of $n$ points in the plane with no two points
having the same $x$-coordinate; sort
its points by their $x$-coordinate so that 
$H=\{p_0, p_1,\dots, p_{n-1}\}$. Let $H_{0}$ be the subset of the
even-indexed points of $H$, and  $H_{1}$ be the subset of the 
odd-indexed points of $H$. That is, $H_{0}=\{p_0, p_2,\dots\}$
and $H_{1}=\{p_1, p_3,\dots\}$. Let $X$ and $Y$ be two finite sets of points in the plane.
We say that $X$ is \emph{high above} $Y$ if: every line determined by two points in $X$ is above every point in $Y$; and
every line determined by two points in $Y$ is below every point in $X$.

\begin{defn} \label{def:mat} 
$H$ is a \textbf{Horton set} if 
\begin{enumerate}
  \item $|H|=1$; or

  \item $|H|\ge 2$; $H_{0}$ and $H_{1}$ are Horton sets; and
  $H_{1}$ is high above $H_{0}$.
\end{enumerate}
\end{defn}

Assume that $H$ is a Horton set. We say that an edge $e:=(p_i,p_j)$ is a
\emph{visible edge} of $H$ if one of the following two conditions are met.
\begin{itemize}
 \item Both $i$ and $j$ are even and for every even $i < l < j$, the point 
 $p_l$ is below the line passing through $e$. In this case we say that
 $e$ is \emph{visible from above}.
 
 \item Both $i$ and $j$ are odd and for every odd $i < l < j$, the point 
 $p_l$ is above the line passing through $e$. In this case we say that
 $e$ is \emph{visible from below}.
\end{itemize}

\begin{lemma}\label{lem:visible_edges}
 The number of visible edges of $H$ is less than $2n$.
\end{lemma}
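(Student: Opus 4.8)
The plan is to prove the (slightly stronger, cleaner) statement by induction on $|H|$, tracking the two kinds of visible edges separately. Every visible edge joins two points of the same parity: a visible-from-above edge joins two even-indexed points and so lives inside $H_0$, while a visible-from-below edge joins two odd-indexed points and lives inside $H_1$. Since $H_0$ is exactly the set of even-indexed points, an edge of $H_0$ is visible from above in $H$ precisely when every point of $H_0$ strictly between its endpoints (in $x$-order) lies below it; I call such an edge an \emph{upper edge} of the Horton set $H_0$, and define \emph{lower edges} symmetrically. Thus if $u(s)$ and $\ell(s)$ denote the number of upper, resp.\ lower, edges of a Horton set on $s$ points, it suffices to show $u(s),\ell(s) < 2s$, because the number of visible edges of $H$ equals $u(|H_0|)+\ell(|H_1|) < 2|H_0|+2|H_1| = 2n$.

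Next I would set up the recursion for $u$. Write $P=H_0$ with its own halves $P_0$ (even, low) and $P_1$ (high), where $P_1$ is high above $P_0$, and classify each upper edge of $P$ by the parities of its endpoints. An edge with both endpoints in $P_1$ is an upper edge of $P$ iff it is an upper edge of $P_1$, since every intermediate point of $P_0$ lies below any line through two points of $P_1$ and so never obstructs; this contributes exactly $u(|P_1|)$. An edge with both endpoints in $P_0$ can never be an upper edge: it spans an odd index, so it has an intermediate point of $P_1$, which lies above the line through two points of $P_0$ and obstructs. Finally, among the mixed edges the $|P|-1$ consecutive ones (joining $q_t$ to $q_{t+1}$) are vacuously upper edges, while every non-consecutive mixed edge is obstructed. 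This gives $u(s)=u(\lfloor s/2\rfloor)+(s-1)$, and the mirror analysis (swapping the roles of $P_0,P_1$ and of above/below) gives $\ell(s)=\ell(\lceil s/2\rceil)+(s-1)$; against the base case $u(1)=\ell(1)=0$ both recursions solve to $u(s),\ell(s)<2s$.

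The main obstacle is the geometric claim that every non-consecutive mixed edge is obstructed. Such an edge $(q_a,q_b)$ has opposite-parity endpoints and $b-a\ge 3$, so the open index range contains at least two consecutive integers and hence both an even (low) and an odd (high) intermediate point. For the upper case I would use the high intermediate point $w\in P_1$: the line through $w$ and the $P_1$-endpoint passes through two points of $P_1$, so it lies above the $P_0$-endpoint, and a short orientation computation converts ``the $P_0$-endpoint lies below the line through $w$ and the $P_1$-endpoint'' into ``$w$ lies above the edge'', which is exactly the obstruction. The lower case is symmetric, using the low intermediate point and the high-above inequality in the other direction. The delicate point is the orientation bookkeeping: for three points in $x$-order, the statements ``the middle point is above the line through the extremes'', ``the right extreme is below the line through the left extreme and the middle'', and ``the left extreme is below the line through the middle and the right extreme'' are all equivalent to a single sign of the triangle's signed area, and matching each one to the correct half of the high-above hypothesis is where an error would most easily creep in.
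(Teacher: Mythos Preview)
Your argument is correct, and it takes a genuinely different route from the paper.  The paper gives an explicit description of the visible edges: the edges visible from below are exactly the consecutive pairs in $H_s$ for each binary string $s=100\cdots 0$, and symmetrically for the edges visible from above with $s=011\cdots 1$.  It then sums $|H_s|-1\le n/2^{|s|}$ over these strings to get a geometric series bounded by $n$ for each type.  You instead introduce the auxiliary quantities $u(s)$ and $\ell(s)$, prove the recursions $u(s)=u(\lfloor s/2\rfloor)+(s-1)$ and $\ell(s)=\ell(\lceil s/2\rceil)+(s-1)$ by a three-way case split on the parities of the endpoints, and bound the recursion.  What your approach buys is that the case analysis actually \emph{justifies} the characterization that the paper simply asserts (``all visible edges from below of $H$ are of this form''): your treatment of the two same-parity cases and of the non-consecutive mixed case is precisely the missing verification.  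What the paper's approach buys is brevity and an exact description of the visible edges, which is convenient later when one wants to know where they sit, not just how many there are.  The two arguments are really the same count viewed from opposite ends---yours sets up the recursion, the paper unrolls it---but the organization and the level of detail are different enough that they read as distinct proofs.
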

\begin{proof}
Let $s:=100\cdots0$ be a binary string starting with a $1$ and followed by a trail of $0$'s of length at most $\lceil \log_2(n)\rceil$.
Every consecutive pair of points of $H_{s}$ defines a visible edge
from below of $H$. Moreover, all visible edges from below of $H$ are of this form, for some $s$. 
Note that  $|H_{s}|\le n/2^{|s|}+1$. A similar analysis holds for the edges visible from above of $H$,  using the binary strings starting with a $0$ and followed by a trail
of $1$'s of length at most $\lceil \log_2(n)\rceil$.
The number of visible edges  of $H$ is at most
\[2 \sum_{i=1}^{\lceil \log_2(n)\rceil } \frac{n}{2^{i}}<2n.\]
\end{proof}

The visible edges of $H$ allows to characterize its empty triangles recursively as follows.
\begin{lemma}\label{lem:splitted_triangs}
 Let $p_i,p_j$ and $p_l$ be the vertices of a triangle $\tau$ of $H$ such that
 either
 \begin{itemize}
  \item  $(p_i,p_j)$ is an edge visible from below 
 and $p_l \in H_0$; or 
 
  \item  $(p_i,p_j)$ is an edge visible from above 
 and $p_l \in H_1$.
 \end{itemize}
 Then $\tau$ is empty. Moreover, every empty triangle of $H$ with at least
 one vertex in each of $H_0$ and $H_1$ is of one these forms.
\end{lemma}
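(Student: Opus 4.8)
The plan is to treat the two cases by the obvious up--down symmetry: reflecting the plane vertically swaps $H_0$ with $H_1$ and ``above'' with ``below'', so it suffices to handle the first case, where $(p_i,p_j)$ is visible from below (hence $p_i,p_j\in H_1$ with, say, $i<j$) and $p_l\in H_0$. The first observation, used throughout, is that since $p_i,p_j\in H_1$ and $H_1$ is high above $H_0$, the line through $p_i$ and $p_j$ lies above every point of $H_0$; in particular $p_l$ lies strictly below it, so the whole triangle $\tau$ is contained in the closed half-plane below the line $p_ip_j$, with $p_ip_j$ as its ``top'' edge and $p_l$ as its lowest vertex.

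For the first assertion (that $\tau$ is empty) I would show that no point $q\in H\setminus\{p_i,p_j,p_l\}$ lies in the interior, splitting into cases according to the part containing $q$. If $q\in H_0$, then $p_l$ and $q$ are two points of $H_0$, so by the high-above property the line $p_lq$ lies below both $p_i$ and $p_j$; thus the entire segment $p_ip_j$ lies strictly above the line $p_lq$. Were $q$ interior to $\tau$, the ray from the vertex $p_l$ through $q$ would have to cross the opposite edge $p_ip_j$, which is impossible since that edge avoids the line $p_lq$. If instead $q=p_s\in H_1$, then $q$ lies below the line $p_ip_j$; for indices $i<s<j$ this contradicts the definition of ``visible from below'', so $q$ would be outside the index interval, say $s<i$. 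Then $p_i,p_s$ are two points of $H_1$, so the line $p_ip_s$ lies above $p_l$, and a short sidedness check shows this places $p_s$ outside $\tau$. Hence no interior point exists and $\tau$ is empty.

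For the converse, let $\tau$ be an empty triangle with at least one vertex in each of $H_0$ and $H_1$. Since $\tau$ has three vertices, exactly two lie in one part and one in the other; by symmetry assume two lie in $H_1$, say $p_i,p_j$ with $i<j$, and one in $H_0$, namely $p_l$. I argue the contrapositive: if $(p_i,p_j)$ is \emph{not} visible from below, there is an odd index $i<t<j$ with $p_t$ below the line $p_ip_j$, and I claim $p_t$ lies in the interior of $\tau$, contradicting emptiness. To see this, note that $p_i,p_t,p_j$ are three points of $H_1$ with $p_t$ below the chord $p_ip_j$, so $p_i\to p_t\to p_j$ is their lower hull; applying the high-above property to the pairs $\{p_i,p_t\}$, $\{p_t,p_j\}$, $\{p_i,p_j\}$ shows that $p_l$ lies below all three corresponding lines, i.e.\ beneath this lower hull. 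Combining this with $x_i<x_t<x_j$ yields that $p_t$ is on the $p_j$-side of line $p_lp_i$, on the $p_i$-side of line $p_lp_j$, and below line $p_ip_j$, which is exactly the condition for $p_t$ to be interior to $\tau$.

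The routine parts are the ray-crossing argument for points of $H_0$ and the bookkeeping of which half-plane each edge determines. The main obstacle, in both directions, is controlling the two ``mixed'' edges $p_lp_i$ and $p_lp_j$ that join the $H_0$-vertex to the $H_1$-vertices, since the high-above hypothesis speaks only about lines spanned by two points of the same part. The key device for overcoming this is to replace a mixed edge by a line spanned by two $H_1$ points (such as $p_ip_s$ or $p_ip_t$) whose position relative to $p_l$ \emph{is} governed by the high-above property, and then transfer the resulting sidedness information to the mixed edge; verifying these sidedness implications carefully is where the real work lies.
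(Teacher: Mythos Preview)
Your proposal is correct and follows the same line as the paper's proof, which is a two-sentence sketch asserting that emptiness ``follows from the definition of the Horton set'' and that, conversely, the edge joining the two same-part vertices must be visible for the triangle to be empty. You have simply supplied the sidedness details (via the high-above property applied to pairs in the same part) that the paper leaves implicit.
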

\begin{proof}
 If $\tau$ is such a triangle then its emptiness follows from the definition
 of the Horton set. Suppose now that $\tau:=p_ip_jp_l$ is an empty triangle of $H$ with
 $p_i, p_j \in H_0$ and $p_l \in H_1$, or $p_i, p_j \in H_1$ and $p_l \in H_0$.
 Then, for $\tau$ to be empty, $p_ip_j$ must be an edge visible from above (resp. below).
\end{proof}

We can now get a good upper bound on the number of empty triangles of $H$.
\begin{cor}\label{cor:empty_triangles}
The number of empty triangles of $H$ is at most $2n^2$. 
\end{cor}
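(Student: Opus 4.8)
The plan is to prove Corollary~\ref{cor:empty_triangles} by partitioning the empty triangles of $H$ into two types and bounding each type via the recursive structure of the Horton set. First I would classify every empty triangle $\tau$ of $H$ according to how its vertices distribute between $H_0$ and $H_1$: either all three vertices lie in a single part ($H_0$ or $H_1$), or $\tau$ has at least one vertex in each part. The second type is precisely the class characterized by Lemma~\ref{lem:splitted_triangs}, so I would count those directly; the first type I would control recursively, since any empty triangle of $H$ with all vertices in $H_0$ is in particular an empty triangle of the sub-Horton-set $H_0$, and likewise for $H_1$.

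To count the triangles of the second type, I would invoke Lemma~\ref{lem:splitted_triangs}: each such triangle consists of a visible edge together with a single third vertex in the opposite part. An edge visible from below together with a point of $H_0$ gives an empty triangle, and an edge visible from above together with a point of $H_1$ gives an empty triangle, and these are all of them. Thus the number of triangles of the second type is at most the number of visible edges times $n$ (crudely bounding the number of available third vertices by $|H| = n$). By Lemma~\ref{lem:visible_edges}, the number of visible edges is less than $2n$, so the second type contributes fewer than $2n^2$ triangles; with a little care in splitting the third-vertex count between $H_0$ and $H_1$, I would aim to sharpen this so the full recursion closes at the stated bound.

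For the recursion, let $T(n)$ denote the maximum number of empty triangles in a Horton set on $n$ points. The decomposition gives, schematically, $T(n) \le T(\lceil n/2\rceil) + T(\lfloor n/2\rfloor) + (\text{type-two count})$, where the first two terms absorb the monochromatic-in-part triangles and the last term is bounded using the two lemmas. The hard part will be calibrating the type-two bound precisely enough that the recurrence solves to $2n^2$ rather than to something like $O(n^2 \log n)$: a naive ``(visible edges)$\times n$'' estimate loses a logarithmic factor once summed over all levels of the recursion, so I would instead charge each third vertex only to its own part (a point of $H_0$ paired with edges visible from below, a point of $H_1$ with edges visible from above), keeping the per-level cost near $\tfrac{3}{2}n^2$ or better. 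Verifying that such a per-level bound, combined with the halving recursion $\sum_i 2\cdot(n/2^i)^2$, telescopes to at most $2n^2$ is the main technical obstacle; everything else is a routine consequence of the two preceding lemmas.
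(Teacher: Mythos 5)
Your proposal is correct and follows the paper's own proof essentially verbatim: the same partition into within-part and cross-part empty triangles, the same use of Lemmas~\ref{lem:visible_edges} and~\ref{lem:splitted_triangs} to bound the cross-part triangles by $n^2$ (pairing the fewer than $n$ edges visible from below with the at most $\lceil n/2\rceil$ points of $H_0$ and symmetrically for the edges visible from above), and the same recurrence $T(n) < T\left(\lceil n/2\rceil\right)+T\left(\lfloor n/2\rfloor\right)+n^{2}\le 2n^{2}$. One minor correction: the naive estimate of (visible edges)$\,\times\, n = 2n^{2}$ per level would \emph{not} lose a logarithmic factor, since the recurrence $T(n)\le 2T(n/2)+2n^{2}$ still sums geometrically to $O(n^{2})$ (namely $4n^{2}$); the sharper per-part charging you describe is needed only to bring the constant down to the stated $2$.
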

\begin{proof}
 Let $T(n)$ be the number of empty triangles in a Horton set 
 of $n$ points. Then $T(n)$ is equal to the number of empty triangles
 with at least one vertex in each of $H_0$ and $H_1$, plus the number of
 empty triangles with all their vertices in $H_0$ or all their vertices in $H_1$. By the definition
 of Horton sets and Lemma~\ref{lem:splitted_triangs}
 we have that 
 \[ T(n) < T \left( \left \lceil \frac{n}{2} \right \rceil \right )+T \left( \left \lfloor \frac{n}{2} \right \rfloor \right )+n^{2} \le 2n^2.\]
\end{proof}

\subsection{Blockers}

Our strategy is to start with a Horton set $H$ of $k$ points and replace each point $p_i$ of $H$
with a cluster $C_i$ of $m$ points. All of the points of $C_i$ are of the same color
and are at a distance of at most some $\varepsilon$ from $p_i$. We choose
$\varepsilon$ to be arbitrarily small. Let  $S$ be the resulting
set. Note that every rainbow triangle of $S$ must have all its vertices in different
clusters. Moreover, since each $C_i$ is arbitrarily close to $p_i$ we have the following.
If $\tau$ is an empty triangle of $S$ with vertices in different clusters $C_i$, $C_j$ and $C_l$
then $p_i$,$p_j$ and $p_l$ are the vertices of an empty triangle in $H$. In principle,
this gives $m^3$ empty rainbow triangles in $S$ per empty triangle of $H$. However, we can place
the points within each cluster in such a way so that only very few of these triangles
are actually empty.

Let $p_i \in H$, and $r:=\min\{\left \lceil \log_2(k) \right\rceil+2,\lceil m/2  \rceil\}$. In what follows we iteratively define
real numbers \[\varepsilon = \varepsilon_1 > \varepsilon_2 > \dots >\varepsilon_{r+1} >0;\]
in the process we also place a subset $B_i$ of points of $C_i$ at some of these distances; 
we refer to the points in $B_i$ as \emph{blockers}.
For $t=1\dots,r$
suppose that $\varepsilon_t$ has been defined and possibly some
points of $B_i$ have been placed. Consider every pair of points $p_j,p_l \in H$ distinct from $p_i$.
Let $q \in B_i$ be at distance $\varepsilon_t$ or more from $p_i$ and such that $q$ is in the interior
of every triangle with vertices $p_i,p_j'$ and $p_l'$, where $p_j'$ and $p_l'$ 
are at a distance of at most $\varepsilon$ of $p_j$ and $p_l$, respectively.
Let $\tau$ be the triangle with vertices $p_i',p_j'$ and $p_l'$,
where $p_i'$ is any point at a distance of at most $\varepsilon_{t+1}$ from $p_i$.
We define $\varepsilon_{t+1} < \varepsilon_{t}$ small enough so that every such
$q$ is in the interior of every such $\tau$.
We say that $q$ \emph{blocks} the triangle with vertices  $p_i',p_j'$ and $p_l'$.

We construct $B_i$ iteratively as follows.
We say that a blocker point at distance $\varepsilon_t$ from $p_i$ is at \emph{layer $t$}. 
Let $s_0,\dots,s_{r'}$ be the binary strings such that:
\begin{itemize}
 \item[$1)$] $s_0=\emptyset$;
 
 \item[$2)$]  for every $0 \le t < r'$, $s_{t+1}=s_t0$ or  $s_{t+1}=s_t1$; and
 
 \item[$3)$] $H_{s_{r'}}=\{p_i\}$.
\end{itemize}
By $2)$ and $3)$ we have that $p_i \in H_{s_t}$ for every $0 \le t  \le r'$. Note that 
$r' \le \left \lceil \log_2(k) \right\rceil$.

Sort the points of $H \setminus\{p_i\}$ counterclockwise
by angle around $p_i$.
For every $t=0,\dots,r'-2$ and as long as we have placed at most $m-2$ blocker points, we place two blocker
points at a distance from $\varepsilon_{t+1}$ from $p_i$ as follows.
\begin{itemize}
 \item  Suppose that $s_{t+1}=s_t0$. Place one blocker point just after the leftmost point of $H_{s_t1}$
 in order by angle around $p_i$; place another blocker point just before the rightmost point of $H_{s_t1}$
 in order by angle around $p_i$, as depicted in Figure~\ref{fig:blocker}(a).
 
 \item Suppose that $s_{t+1}=s_t1$. Place one blocker point just after the leftmost point of $H_{s_t0}$
 in order by angle around $p_i$; place another blocker point just before the rightmost point of $H_{s_t0}$
 in order by angle around $p_i$.
\end{itemize}
Let $B_i' \subset B_i$ the set of these blocker points.
If $|B_i'|<m$ then we proceed to place the remaining points of $C_i$. 
 If $m-|B_i'| < k$ then place the remaining
points of $C_i$ in any way at a distance of at most $\varepsilon_{r}$ of $p_i$; in this case we have that $B_i=B_i'$.
Suppose that $m -|B_i'| \ge k$. For every $t=0,\dots,r'-1$ we place additional blocker points as follows.
\begin{itemize}
 \item  If $s_{t+1}=s_t0$ then place a blocker point, at a distance of $\varepsilon_r$ from $p_i$, between any two consecutive
 points of $H_{s_t1}$ in order by angle around $p_i$; see Figure~\ref{fig:blocker}(b).
 
 \item  If $s_{t+1}=s_t1$ then place a blocker point, at a distance of $\varepsilon_r$ from $p_i$, between any two consecutive
 points of $H_{s_t0}$ in order by angle around $p_i$.
\end{itemize}
Let $B_i'' \subset B_i$ the set of these blocker points. No more blocker points are added and $B_i=B_i'\cup B_i''$.
If $m > |B_i|= |B_i'|+|B_i''|$ then place the remaining
points of $C_i$ in any way at a distance of at most $\varepsilon_{r+1}$ from $p_i$.

\begin{figure}
\centering
\includegraphics{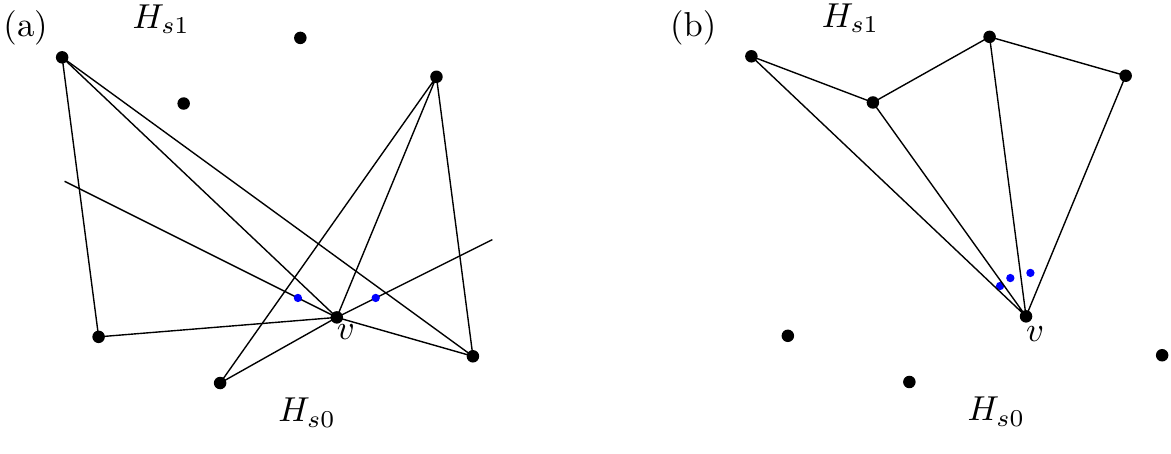}
\caption{(a) Blocking triangles with one point in $H_{s0}$ and one point in $H_{s1}$ (b) Blocking triangles with two vertices in $H_{s1}$.}
\label{fig:blocker}
\end{figure}

We are now ready to prove our upper bounds on $f(k,m)$.
%
\begin{proof}[Proof of the upper bound in Theorem~\ref{thm:main}]
We count the number of empty rainbow triangles determined by $S$ as constructed above.
To every empty triangle $\tau'$ of $S$ with vertices $p_i'\in C_i$, $p_j'\in C_j$ 
and $p_l' \in C_l$, we assign the empty triangle $\tau$ of $H$ with vertices $p_i$,
$p_j$ and $p_l$.
 Let $s$ be the binary string such that the vertices of $\tau$ are contained in $H_s$ but not in
 $H_{s0}$ and $H_{s1}$. We say that $\tau$ is in layer $|s|+1$.
 Without loss of generality suppose that $p_j$ and $p_l$
 are both contained in $H_{s0}$ or are both contained in $H_{s1}$.
 
 If $\tau'$ contains a blocker point of each of $B_j'$ and $B_l'$ then these blocker points
 are at layer $|s|+1$. In this case there at most $2(|s|+1)$ possible choices for each of $p_i'$ and $p_j'$.
 Otherwise, $m < 2(|s|+1)$ and there are at most $m$ possible choices for each of  $p_j'$ and $p_l'$.
 If $m \ge k + 2\left \lceil \log_2(k) \right\rceil$ then $\tau'$ contains a point
 from $B_i''$; and there at most  $k + 2\left \lceil \log_2(k) \right\rceil$ possible
 choices for $p_i'$. Otherwise, $m <  k + 2\left \lceil \log_2(k) \right\rceil$
 and there at most $m$ possible choices for $p_i'$.
 Summarizing, $\tau$ is assigned to at most the following number of empty rainbow triangles of $S$:
 \begin{equation*}
\begin{array}{ll}
  m^{3}  & \textrm{ if } m \le  2|s|+1; \\
 
  4(|s|+1)^2 m & \textrm{ if } 2|s|+1 < m <  k + 2\left \lceil \log_2(k) \right\rceil; \textrm{ and}\\
  
  4(|s|+1)^2 \left (k +  2\left \lceil \log_2(k) \right\rceil \right ) & \textrm{ if } m \ge k + 2\left \lceil \log_2(k) \right\rceil.
\end{array} 
\end{equation*}

 By Lemma~\ref{lem:splitted_triangs}, $(p_j,p_l)$ is a visible edge of $H_s$. 
 Since $|H_s| \le \left \lceil k/2^{|s|}  \right \rceil$, by Lemma~\ref{lem:visible_edges} 
 there are at most 
 $2\left \lceil k/2^{|s|}  \right \rceil \left \lceil k/2^{|s|+1}  \right \rceil \le 8 \left (k^2/2^{2|s|} \right )$ empty
 triangles in $H_s$. Thus, for every $1 \le t \le \lceil \log_2(k) \rceil$ 
 there at most $2^{t-1}8 \left (k^2/2^{2 (t-1)} \right ) =8k^2/2^{t-1}$
 empty triangles in $H$ at layer $t$. Let $m':=\min\{m, k +  2\left \lceil \log_2(k) \right\rceil \}$. Therefore, the number of empty rainbow
 triangles determined by $S$ is at most
 \begin{equation}
  \sum_{t=1}^{\lfloor m'/2 \rfloor} 4t^2 m'  \left (\frac{8k^2}{2^{t-1}} \right ) + \sum_{t=\lfloor m/2 \rfloor+1}^{\lceil \log_2(k) \rceil}{m}^3   \left ( \frac{8k^2}{2^{t-1}} \right ), \label{eq:triangs}
 \end{equation}
 where the second term is set to $0$ if $\lfloor m/2 \rfloor > \lceil \log_2(k) \rceil$. 
 
 If $m'=m$ then (\ref{eq:triangs}) is at most
 \begin{align}
    & \sum_{t=1}^{\lfloor m/2 \rfloor} 4t^2 m  \left (\frac{8k^2}{2^{t-1}} \right ) + \sum_{t=\lfloor m/2 \rfloor+1}^{\lceil \log_2(k) \rceil} 4t^2m   \left ( \frac{8k^2}{2^{t-1}} \right ),   \nonumber \\
    &= \sum_{t=1}^{\lceil \log_2(k) \rceil} 4t^2 m  \left (\frac{8k^2}{2^{t-1}} \right ) \nonumber \\
    &= 32 k^2 m  \sum_{t=1}^{\lceil \log_2(k) \rceil} \left (\frac{t^2}{2^{t-1}} \right ) \nonumber \\
		&\leq 384 k^2m \nonumber \\
    & =O(k^2 m).\nonumber
 \end{align}
If $m' = k+ 2\left \lceil \log_2(k) \right\rceil $ then (\ref{eq:triangs}) is at most
 \begin{align}
    & \sum_{t=1}^{\lceil \log_2(k) \rceil} 4t^2 \left( k+ 2\left \lceil \log_2(k) \right\rceil \right )  \left (\frac{8k^2}{2^{t-1}} \right )  \nonumber \\
    &= 32 k^2 \left( k+ 2\left \lceil \log_2(k) \right\rceil \right )  \sum_{t=1}^{\lceil \log_2(k) \rceil} \left (\frac{t^2}{2^{t-1}} \right ) \nonumber \\
		&\leq 384 k^2 \left( k+ 2\left \lceil \log_2(k) \right\rceil \right ) \nonumber \\
    & =O(k^3).\nonumber
 \end{align}
 Therefore, \begin{equation*}
  f(k,m) = \left\{
\begin{array}{ll}
  O(k^2m ) & \textrm{ if }   m \le k,\\
  
  O(k^3) & \textrm{ if } m > k
\end{array} \right. 
\end{equation*}
\end{proof}

\section{Empty Rainbow Quadrilaterals}
A natural generalization is to consider empty rainbow polygons;
we construct a $k$-colored point set with the same number of points 
in each color class and that does not determine an empty rainbow quadrilateral.
First, we observe the following.
\begin{lemma}\label{sec4:bas_idea}
The point set depicted in Figure~\ref{4gon:detail} does not determine an empty rainbow quadrilateral.
\end{lemma}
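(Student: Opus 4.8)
The plan is to reduce the statement to a finite geometric check and then exploit the structure of the configuration to keep that check short. By the definitions in the excerpt, an empty rainbow quadrilateral is a set of four points of four pairwise distinct colors whose convex hull is a quadrilateral (so the four points are in convex position) and contains no other point of the set in its interior. I would therefore first read off from Figure~\ref{4gon:detail} the explicit positions and colors of the points, recording for each color class which points carry it and where they sit relative to the rest. The goal then becomes: for every quadruple of points carrying four distinct colors, show that either the four points are \emph{not} in convex position (one lies inside the triangle spanned by the other three, so their convex hull is a triangle rather than a quadrilateral), or the convex hull of the four points contains a fifth point of the set in its interior.

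Rather than test all $\binom{|S|}{4}$ quadruples, I would next isolate the structural feature the gadget is built around. I expect this to be a nesting of the colored points into convex layers together with one or more \emph{blocker} points, placed — in the same spirit as the blockers of the previous section — precisely so as to lie inside the quadrilaterals that threaten to be rainbow. I also expect the picture to carry a reflective symmetry, which would roughly halve the number of cases, and to contain a separating line or small central region that forces any quadruple spanning four distinct colors to straddle it and hence to enclose an interior point or to fail convexity.

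With such an invariant in hand, I would run through the resulting handful of cases, and in each one name a single explicit witness: either a vertex lying inside the triangle of the other three (ruling out convexity) or an interior point of the convex hull (ruling out emptiness). Because the color classes in this gadget are small, the number of genuinely distinct color-quadruples is small, and after the symmetry reduction the verification should be a short, directly checkable list.

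The step I expect to be the main obstacle is the second one: extracting from the picture the right invariant — a single separating configuration or blocker that uniformly explains why no rainbow quadruple can be simultaneously convex and empty — so that the final case analysis is a clean list rather than a brute-force enumeration. Getting the convex-versus-concave bookkeeping exactly right, so that no near-degenerate quadruple slips through as an empty convex rainbow quadrilateral, is the part most likely to hide an error.
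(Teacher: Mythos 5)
What you have written is a plan for finding a proof rather than a proof: the step you defer to the end and flag as the likely obstacle --- extracting from the figure the structural invariant that collapses the case analysis --- is the entire content of the paper's argument, and without it nothing is actually verified. The invariant is this: the figure has three points $A$, $B$, $C$ whose three colors each occur exactly once, so every rainbow quadrilateral $\tau$ must have $A$, $B$ and $C$ among its four vertices, hence at least two of $AB$, $AC$, $BC$ as sides, say $AB$ and $AC$; the fourth color (red) is realized by a pair of points placed near each of $A$, $B$, $C$ inside the triangle, arranged so that if the fourth vertex of $\tau$ is not a red point near $A$ then both red points near $A$ lie in the interior of $\tau$, while if it is one of them then the other lies in the interior of $\tau$. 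Either way $\tau$ is not empty, and the whole verification is a two-line dichotomy on the fourth vertex rather than an enumeration of color quadruples. Your fallback of checking all rainbow quadruples would in principle succeed for a fixed finite figure, but as written you supply neither coordinates nor witnesses, so the proposal does not establish the lemma.

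A secondary issue: you take ``quadrilateral'' to mean a quadruple in convex position, but the paper's definition of an empty polygon does not require convexity (only $r$-holes are defined to be convex), and the lemma is applied later to arbitrary rainbow quadrilaterals --- indeed the paper's own second case treats a fourth vertex lying inside the triangle $ABC$, which is not in convex position with $A$, $B$, $C$. Restricting to convex quadruples would therefore prove a strictly weaker statement than the one the Section 4 construction relies on. If you complete your outline, the witness you exhibit in each case must be a point interior to the (possibly non-convex) quadrilateral, which is exactly what the paired red blockers provide.
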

\begin{proof}
Let $\tau$ be a rainbow quadrilateral of the point set depicted in Figure~\ref{4gon:detail}.
Note that $\tau$ must have $A,B$ and $C$ as vertices.
Thus, at least two of $AB$, $AC$ and $BC$ are sides of $\tau$.
Assume without loss of generality that $AB$ and $AC$ are sides of $\tau$.
If the fourth vertex of $\tau$ is not one of the red points near $A$ then these points are inside $\tau$;
and $\tau$ is not empty. If the fourth vertex of $\tau$ is one of the red points near  $A$
then by construction the other red point near $A$ is inside $\tau$; and again $\tau$ is not empty. 
\end{proof}

\begin{figure}[h]
\center
\includegraphics{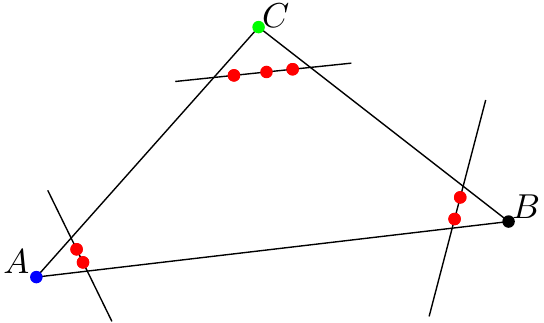}
\caption{A colored point set without an empty rainbow quadrilateral}
\label{4gon:detail}
\end{figure}

%


We use Lemma~\ref{sec4:bas_idea} to construct our point set.
First we take a regular $(k-1)$-gon, $P$, with vertices $p_1,\dots,p_{k-1}$; we replace every point $p_i$ with a cluster $C_i$ of $m$ points of color $i$.
Let $P'$ be a copy of $P$ with vertices $p'_1,\dots,p'_{k-1}$, which is rotated by $\frac{2\pi}{2(k-1)}=\frac{\pi}{k-1}$. So $p_1,p'_1,p_2,\dots,p_{k-1},p'_{k-1}$ form a regular $2(k-1)$-gon. 
Let $\varepsilon$ be sufficiently small.  
For every $1 \le i \le k-1$, we place the points of $C_i$ at a distance of at most $\varepsilon$ from $p_i$. 
For $1\leq i \leq k-1$ ($p'_0=p'_{k-1}$) we place at least $2(k-3)$ points of color $k$ arbitrarily close to the line segment $p'_{i-1}p'_{i}$ 
and so that the following holds. 
Let $q_1$ and $q_2$ be any two consecutive points of $P$ distinct from $p_i$. 
In the triangle with vertices $p_i,q_1$ and $q_2$ there are at least two points on $p'_{i-1}p'_{i}$ of color $k$.
Furthermore, these points are at a distance of at least $\varepsilon$ to the lines $\overline{p_i q_1}$ and $\overline{p_i q_2}$.
Note that $m\geq 2k^2-8k+6$. Let $C_k$ be the set of points of color $k$ in this construction.
The construction for $k=6$ is depicted in Figure~\ref{4gon:construction}. 

\begin{figure}[h]
\center
\includegraphics{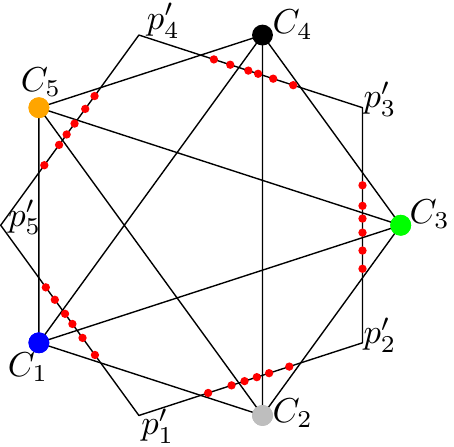}
\caption{A construction of a $6$-colored point set without empty rainbow $4$-gons; the clusters $C_i$ are drawn enlarged.}
\label{4gon:construction}
\end{figure}

\begin{theorem}
The set $\bigcup_{i=1}^{k} C_i$ is a $k$-colored point set, with the same number of
points of each color class, that  does not determine an empty rainbow quadrilateral.
\end{theorem}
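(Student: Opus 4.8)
The plan is to show that every rainbow quadrilateral $Q$ of $S:=\bigcup_{i=1}^{k}C_i$ has a point of $S$ in its interior, so that no rainbow quadrilateral is empty; I read ``quadrilateral'' as a convex $4$-hole, but since the argument always exhibits an explicit interior point it is insensitive to this convention. First I would settle the cardinalities: each color $1,\dots,k-1$ occurs exactly $m$ times, while color $k$ occurs at least $2(k-1)(k-3)=2k^{2}-8k+6$ times; as $m\ge 2k^{2}-8k+6$, the surplus color-$k$ points can be placed arbitrarily close to already placed color-$k$ points, so that all classes have size $m$ without creating any new empty quadrilateral. Because $\varepsilon$ is arbitrarily small, the combinatorial type of $Q$ is governed by the vertices $p_1,\dots,p_{k-1}$ of $P$: the four vertices of $Q$ lie in pairwise distinct clusters, and since $Q$ uses four colors at most one of them has color $k$. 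Hence at least three vertices of $Q$ lie in clusters $C_a,C_b,C_c$ at distinct vertices $p_a,p_b,p_c$ of $P$, and the fourth vertex is either a color-$k$ point or lies in a further cluster $C_d$.

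Next I would choose an \emph{apex}. Regarding the three (or four) chosen cluster-vertices as points on the circumcircle of $P$, the arcs between cyclically consecutive chosen vertices sum to $2\pi$, so at most one of them is $\ge\pi$. I would pick as apex a chosen vertex $p_b$ both of whose incident arcs (to its cyclic neighbours $u,v$ among the chosen vertices) are $<\pi$; such a $p_b$ always exists. Moreover $u$ and $v$ are then exactly the two neighbours of the corresponding vertex of $Q$ along $Q$, so the triangle on (the $Q$-vertex near) $p_b$, $u$ and $v$ is contained in $Q$. When the fourth vertex has color $k$ there is no $p_d$, and one simply takes $u=p_a$, $v=p_c$, using that the fourth vertex lies outside $\triangle p_ap_bp_c$ by convex position, whence $\triangle p_ap_bp_c\subseteq Q$.

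The heart of the argument, and the step I expect to be hardest, is to verify that two color-$k$ points of $C_b$ lie in the interior of this triangle. By the construction the color-$k$ points of $C_b$ sit near the chord $p'_{b-1}p'_b$, directly in front of $p_b$ (between $p_b$ and the centre of $P$), and for every edge $q_1q_2$ of $P$ not incident to $p_b$ the triangle $p_bq_1q_2$ contains at least two of them at distance $\ge\varepsilon$ from the lines $\overline{p_bq_1},\overline{p_bq_2}$. Because $p_b$ is a valid apex, the direction from $p_b$ toward the centre lies inside the angle $\angle u p_b v$, so these frontal points lie in the wedge at $p_b$; and writing $\alpha',\gamma'$ for the two incident arcs, the chord $uv$ meets the ray from $p_b$ through the centre at radius $R\cos\frac{\alpha'+\gamma'}{2}/\cos\frac{\alpha'-\gamma'}{2}$, which is $<R\cos\frac{\pi}{k-1}$ since $\alpha',\gamma'\in[\frac{2\pi}{k-1},\pi)$ (here $R$ is the circumradius and $R\cos\frac{\pi}{k-1}$ is the depth of the frontal chord $p'_{b-1}p'_b$). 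Thus the frontal color-$k$ points also lie on the $p_b$-side of $uv$, hence inside $\triangle\,(p_b)uv$. The $\varepsilon$-slack in the construction is precisely what lets me replace $p_a,p_b,p_c$ (and $p_d$) by the actual cluster points of $Q$, which are within $\varepsilon$ of them; matching the per-edge placement of the guards to the possibly non-consecutive neighbours $u,v$ is the bookkeeping I expect to be delicate.

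Finally I would conclude exactly as in Lemma~\ref{sec4:bas_idea}. If the fourth vertex of $Q$ has a color in $\{1,\dots,k-1\}$, the two color-$k$ points just located are interior to $Q$ and are not vertices of $Q$, so $Q$ is not empty. If the fourth vertex $w$ has color $k$, it might coincide with one of the two frontal points of $C_b$; but then the other frontal point of $C_b$ is still a color-$k$ point, distinct from $w$, in the interior of $\triangle p_ap_bp_c\subseteq Q$, so again $Q$ is not empty. This is the two-blocker mechanism of Lemma~\ref{sec4:bas_idea}, with $p_b$ in the role of $A$ and its two frontal color-$k$ points in the role of the two red points near $A$. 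In all cases $Q$ contains an interior point of $S$, so $S$ determines no empty rainbow quadrilateral.
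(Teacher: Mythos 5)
Your argument is correct and follows essentially the same route as the paper: you locate, for a suitably chosen vertex of the rainbow quadrilateral, two of the color-$k$ blockers guaranteed by the construction in its interior, and finish with the two-blocker case analysis of Lemma~\ref{sec4:bas_idea}; your explicit arc/chord computations merely make quantitative what the paper asserts ``by construction.'' One small slip: the frontal color-$k$ points you invoke belong to $C_k$ (they are the color-$k$ points placed near the chord $p'_{b-1}p'_b$), not to the cluster $C_b$.
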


\begin{proof}
Let  $\tau$ be a rainbow triangle with vertices $q_i \in C_i$, $q_j\in C_j$ and $q_l \in C_l$. 
We show that $\tau$ has a structure like the point set depicted in Figure~\ref{4gon:detail}.
Consider the points with color $k$ near the line segment $p'_{i-1}p'_{i}$ and that are between the line segments $q_i q_j$ and $q_i q_l$.
By construction, there are at least two points of color $k$ near $p'_{i-1}p'_{i}$ between $p_i p_j$ and $p_i p_l$.
Since these points are at a distance of at most $\varepsilon$ to either of these lines, they are also between $q_i q_j$ and $q_i q_l$.
By the same argument there are points of color $k$ on $p'_{j-1}p'_j$ and $p'_{l-1}p'_l$ inside $\tau$.
Therefore, by Lemma~\ref{sec4:bas_idea}, the point set $\bigcup_{i=1}^{k} C_i$ does not determine an empty rainbow quadrilateral.
\end{proof}

%
%
%

 The construction  described above determines many monochromatic quadrilaterals. This leads us to the following question.
\begin{problem}
Does every sufficiently large $k$-colored ($k \ge 4$) point set with the same 
 number of points in each color class determines an
 empty rainbow quadrilateral or an empty monochromatic quadrilateral?
\end{problem}


\small
\bibliographystyle{abbrv} \bibliography{rainbow}


\end{document}